\DeclareMathOperator*{\slim}{s-lim}
\DeclareMathOperator*{\supp}{supp}
\newcommand{\Eqn}[1]{&\hspace{-0.5em}#1\hspace{-0.5em}&}
\newtheorem{The}{Theorem}[section]
\newtheorem{Ass}[The]{Assumption}
\newtheorem{Prop}[The]{Proposition}
\newtheorem{Cor}[The]{Corollary}
\title{Propagation property and its application\\to inverse scattering for fractional powers\\of the negative Laplacian}
\author{Atsuhide ISHIDA\\
\\
Department of Liberal Arts, Faculty of Engineering,
 \\Tokyo University of Science\\
\normalsize 3-1 Niijuku, 6-chome, Katsushika-ku,Tokyo 125-8585, Japan\\
\normalsize E-mail: aishida@rs.tus.ac.jp
}
\date{}
\begin{document}
\maketitle

\begin{abstract}
Enss (1983) proved a propagation estimate for the usual free Schr\"odinger operator that turned out later to be very useful for inverse scattering in the work of Enss--Weder (1995). Since then, this method has been called the Enss--Weder time-dependent method. We study the same type of propagation estimate for the fractional powers of the negative Laplacian and, as with the Enss--Weder method, we apply our estimate to inverse scattering. We find that the high-velocity limit of the scattering operator uniquely determines the short-range interactions.
\end{abstract}

\quad\textit{Keywords}: scattering theory, inverse problem, fractional Laplacian\par
\quad\textit{MSC}2010: 81Q10, 81U05, 81U40

\section{Introduction}
For $1/2\leqslant\rho\leqslant1$, the fractional powers of the negative Laplacian as self-adjoint operator acting on $L^2(\mathbb{R}^n)$ is defined by the Fourier multiplier with symbol
\begin{equation}
\omega_\rho(\xi)=|\xi|^{2\rho}/(2\rho).
\end{equation}
We denote this operator by
\begin{equation}
H_{0,\rho}=\omega_\rho(D_x),\label{free}
\end{equation}
where $D_x=-i\nabla_x=-i(\partial_{x_1},\dots,\partial_{x_n})$. More specifically, we can represent $H_{0,\rho}$ by the Fourier integral operator
\begin{eqnarray}
H_{0,\rho}\phi(x)\Eqn{=}(\mathscr{F}^*\omega_\rho(\xi)\mathscr{F}\phi)(x)\nonumber\\
\Eqn{=}\int_{\mathbb{R}^n}e^{ix\cdot\xi}\omega_\rho(\xi)(\mathscr{F}\phi)(\xi)d\xi/(2\pi)^{n/2}\nonumber\\
\Eqn{=}\int_{\mathbb{R}^{2n}}e^{i(x-y)\cdot\xi}\omega_\rho(\xi)\phi(y)dyd\xi/(2\pi)^n
\end{eqnarray}
for $\phi\in\mathscr{D}(H_{0,\rho})=H^{2\rho}(\mathbb{R}^n)$, which is the Sobolev space of order $2\rho$. In particular, if $\rho=1$, then $H_{0,1}$ is the free Schr\"odinger operator $\omega_1(D_x)=-\Delta_x/2=-\sum_{j=1}^n\partial_{x_j}^2/2$. If $\rho=1/2$, then $H_{0,1/2}$ is the massless relativistic Schr\"odinger operator $\omega_{1/2}(D_x)=\sqrt{-\Delta_x}$.\par
In Section 2, we prove the following Enss-type propagation estimate for $e^{-itH_{0,\rho}}$. Throughout this paper, $F(\cdots)$ is the usual characteristic function of the set $\{\cdots\}$. We denote the smooth characteristic function $\chi\in C^\infty(\mathbb{R}^n)$ by
\begin{equation}
\chi(x)=
\begin{cases}
\ 1\quad|x|\geqslant2,\\
\ 0\quad|x|\leqslant1.
\end{cases}
\end{equation}

\begin{The}\label{the1}
Let $f\in C_0^\infty(\mathbb{R}^n)$ with $\supp f\subset\{ \xi\in\mathbb{R}^n\bigm||\xi|\leqslant\eta\}$ for some given $\eta>0$. Choose $v\in\mathbb{R}^n$ such that $|v|>\eta$ and
\begin{equation}
\left\{
\begin{array}{cc}
16n(1-\rho)(|v|-\eta)^{2\rho-2}\eta\leqslant|v|^{2\rho-1} & 1/2\leqslant\rho<1,\\
8\eta\leqslant|v| & \rho=1.
\end{array}
\right.\label{size_v}
\end{equation}
For $t\in\mathbb{R}$ and $N\in\mathbb{N}$, the following estimate holds.
\begin{eqnarray}
\lefteqn{\left\|\chi\left(\frac{x-(\nabla_\xi\omega_\rho)(v)t}{|v|^{2\rho-1}|t|/4}\right)e^{-itH_{0,\rho}}f(D_x-v)F\left(|x|\leqslant\frac{|v|^{2\rho-1}|t|}{16}\right)\right\|}\nonumber\\
\Eqn{}\hspace{75mm}\leqslant C_N(1+|v|^{2\rho-1}|t|)^{-N},\label{fractional}
\end{eqnarray}
where $\|\cdot\|$ stands for the operator norm on $L^2(\mathbb{R}^n)$, and the constant $C_N>0$ also depends on the dimension $n$ and the shape of $f$.
\end{The}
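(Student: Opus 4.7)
The plan is to represent the operator by its oscillatory-integral kernel and apply non-stationary phase. Starting from
\begin{equation*}
(e^{-itH_{0,\rho}}f(D_x-v)\phi)(x)=(2\pi)^{-n}\int_{\mathbb{R}^{2n}}e^{i(x-y)\cdot\xi-it\omega_\rho(\xi)}f(\xi-v)\phi(y)\,dyd\xi,
\end{equation*}
I would substitute $\xi=v+\zeta$ so that the $\zeta$-integration is localized to $|\zeta|\leqslant\eta$, and write the integrand as $e^{i\Phi(\zeta)}f(\zeta)$ with
\begin{equation*}
\Phi(\zeta)=(x-y)\cdot(v+\zeta)-t\omega_\rho(v+\zeta),\qquad \nabla_\zeta\Phi=(x-y)-t(\nabla\omega_\rho)(v+\zeta).
\end{equation*}

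The first main step is to prove a lower bound of the form $|\nabla_\zeta\Phi|\gtrsim|v|^{2\rho-1}|t|$ on the common support of $\chi$, $F$ and $f$. The cutoff $\chi$ forces $|x-t(\nabla\omega_\rho)(v)|\geqslant|v|^{2\rho-1}|t|/4$, $F$ forces $|y|\leqslant|v|^{2\rho-1}|t|/16$, and $f$ restricts $|\zeta|\leqslant\eta$. The remaining error $|t|\,|(\nabla\omega_\rho)(v+\zeta)-(\nabla\omega_\rho)(v)|$ I would control by the mean value theorem together with the explicit Hessian
\begin{equation*}
\partial_j\partial_k\omega_\rho(\xi)=(2\rho-2)|\xi|^{2\rho-4}\xi_j\xi_k+|\xi|^{2\rho-2}\delta_{jk},
\end{equation*}
evaluated at $|v+s\zeta|\geqslant|v|-\eta>0$. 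Hypothesis \eqref{size_v} is tailored so that this error is absorbed into $(1/4-1/16)|v|^{2\rho-1}|t|$, leaving $|\nabla_\zeta\Phi|$ comparable to $|v|^{2\rho-1}|t|$.

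Next I would perform $N$-fold integration by parts using the operator $L=(i|\nabla_\zeta\Phi|^2)^{-1}\nabla_\zeta\Phi\cdot\nabla_\zeta$, which satisfies $Le^{i\Phi}=e^{i\Phi}$. Derivatives falling on $f$ are uniformly bounded because $f\in C_0^\infty$; derivatives falling on $\nabla_\zeta\Phi$ produce terms $t\partial^\alpha\omega_\rho(v+\zeta)$ of size $O(|t|(|v|-\eta)^{2\rho-|\alpha|})$ for $|\alpha|\geqslant2$. Combined with the lower bound on $|\nabla_\zeta\Phi|$ and with \eqref{size_v} at each order, every integration by parts gains a factor $O((|v|^{2\rho-1}|t|)^{-1})$. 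Since $f$ has compact support, $\zeta$-integration then yields the pointwise kernel estimate $|K(x,y)|\leqslant C_N(|v|^{2\rho-1}|t|)^{-N}$ on the relevant support.

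To pass from the pointwise estimate to the operator norm bound I would extract additional polynomial decay in $|x-y|$ from the same operator $L$ (the factor $x-y$ is built into $\nabla_\zeta\Phi$), obtaining a kernel that is $L^1$-integrable in each variable separately, and then apply Schur's test. The regime $|v|^{2\rho-1}|t|\leqslant1$ is handled trivially by $\|f(D_x-v)\|\leqslant\|f\|_{L^\infty}$, which accounts for the $1+$ on the right-hand side. \textbf{The main obstacle} is the bookkeeping in the integration by parts: the derivatives $\partial^\alpha\omega_\rho$ carry dimension- and $\rho$-dependent combinatorial constants, and one has to verify that \eqref{size_v} really controls every order simultaneously. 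The separate case $\rho=1$ in \eqref{size_v} reflects that $\omega_1$ is quadratic, so all derivatives of order $\geqslant3$ vanish identically and the condition collapses to the familiar $8\eta\leqslant|v|$ of the free Schr\"odinger estimate.
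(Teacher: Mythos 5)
Your proposal is correct in outline, but it takes a genuinely different route from the paper. The paper does \emph{not} run a non-stationary-phase argument on the kernel: it conjugates by $e^{iv\cdot x}$ and by $e^{it\omega_\rho(D_x+v)}$, so that the operator becomes a pseudo-differential operator with symbol $\chi\bigl((x+(\nabla_\xi\omega_\rho)(\xi+v)t-(\nabla_\xi\omega_\rho)(v)t)/(|v|^{2\rho-1}|t|/4)\bigr)f(\xi)$ composed with the sharp cutoff $F(|x|\leqslant|v|^{2\rho-1}|t|/16)$; condition \eqref{size_v} forces $|x|\geqslant|v|^{2\rho-1}|t|/8$ on the support of that symbol, so one may insert $\chi(x/(|v|^{2\rho-1}|t|/16))$ for free, apply the asymptotic product formula \eqref{product_formula}, observe that every term of order $\leqslant N-1$ vanishes by support disjointness, and bound the remainder by the $L^2$-boundedness theorem, each $x$-derivative of the inserted cutoff contributing a factor $(|v|^{2\rho-1}|t|)^{-1}$. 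Your kernel-plus-Schur-test argument rests on exactly the same geometric input --- the estimate $|(\nabla_\xi\omega_\rho)(\xi+v)-(\nabla_\xi\omega_\rho)(v)|\leqslant|v|^{2\rho-1}/8$ derived from \eqref{size_v}, which yields $|\nabla_\zeta\Phi|\geqslant(1/4-1/16-1/8)|v|^{2\rho-1}|t|$ and, by comparing $|x-y|$ with $|t|\,|(\nabla\omega_\rho)(v+\zeta)|\leqslant 2|v|^{2\rho-1}|t|$, also $|\nabla_\zeta\Phi|\gtrsim|x-y|$ --- but replaces the symbolic calculus by hand-run integration by parts. What your version buys is self-containedness (no appeal to the product formula or to an $L^2$-boundedness theorem); what it costs is the bookkeeping you yourself flag: one must verify $|\partial_\zeta^\alpha\Phi|/|\nabla_\zeta\Phi|^{|\alpha|}\lesssim(|v|^{2\rho-1}|t|)^{1-|\alpha|}$ for all $|\alpha|\geqslant 2$, which amounts to $(|v|-\eta)^{2\rho-|\alpha|}\lesssim|v|^{2\rho-1}$. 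This does go through: for $\rho<1$, \eqref{size_v} forces $|v|-\eta\geqslant(16n(1-\rho))^{1/(2-2\rho)}\eta$, so all higher-order derivatives of $\omega_\rho$ are controlled by the Hessian bound up to constants depending only on $n$, $\rho$ and $\eta$, consistent with the stated dependence of $C_N$. In effect your proposal shows that the Enss-style direct approach, which the author sets aside as not working well for fractional powers, can be pushed through provided this order-by-order verification is carried out.
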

Enss \cite{En} proved the following estimate for the free Schr\"odinger operator
\begin{equation}
\left\|F\left(|x-vt|\geqslant\frac{|v||t|}{4}\right)e^{-itD_x^2/2}f(D_x-v)F\left(|x|\leqslant\frac{|v||t|}{16}\right)\right\|\leqslant C_N(1+|v||t|)^{-N}.\label{enss}
\end{equation}
This estimate was proved not only for the spheres but more generally for the measurable subsets of $\mathbb{R}^n$ (see Proposition 2.10 in Enss \cite{En}). Before considering Theorem \ref{the1} further, we discuss the meaning of the estimate \eqref{enss}. From the perspective of classical mechanics, $D_x$ represents the momentum or equivalently the velocity of the particle of unit mass. On the left-hand side of \eqref{enss}, $D_x$ is localized to the neighborhood of $v$ by the cut-off function $f$. Therefore, along the time evolution of the propagator $e^{-itD_x^2/2}$, the position of the particle behaves according to
\begin{equation}
x\sim D_xt\sim vt.
\end{equation}
Because the behavior of the points on the sphere is the same, the center of the sphere moves toward $vt$ from the origin
\begin{equation}
\left\{ x\in\mathbb{R}^n\biggm||x|\leqslant\frac{|v||t|}{16}\right\}\sim\left\{ x\in\mathbb{R}^n\biggm||x-vt|\leqslant\frac{|v||t|}{16}\right\}.\label{sphere}
\end{equation}
We extract an interpretation of the estimate \eqref{enss} from these observations. The behavior of the sphere \eqref{sphere} makes the characteristic functions on both sides of \eqref{enss} disjoint. Thus, this gives rise to the decay associated with time and velocity. Theorem \ref{the1} is the fractional Laplacian version of \eqref{enss}. From $(\nabla_\xi\omega_\rho)(v)=|v|^{2\rho-2}v$, the case where $\rho=1$ in \eqref{fractional} is essentially equivalent to \eqref{enss}. Conversely, if $\rho=1/2$ in \eqref{fractional}, the decay on the right-hand side does not involve $|v|$. However, this does not conflict with the physical meaning. In the case where $\rho=1/2$, the system is relativistic. In this system, the particle does not have a mass, and its velocity is the speed of light, which is normalized to $1$. Therefore, the decay cannot include the velocity $v$. \par
Spectral analysis for the relativistic Schr\"odinger operator was initiated by Weder \cite{We1}, following which Umeda \cite{Um1,Um2} studied the resolvent estimate and mapping properties associated with the Sobolev spaces. Wei \cite{Wei} studied the generalized eigenfunctions. Weder \cite{We2} analyzed the spectral properties of the fractional Laplacian for the massive case, and Watanabe \cite{Wa} studied the Kato-smoothness. Giere \cite{Gi} investigated the scattering theory and proved the asymptotic completeness of the wave operators for short-range perturbations. Recently, Kitada \cite{Ki1,Ki2} constructed the long-range theory.\par
In Section 3, we assume that the dimension of the space satisfies $n\geqslant2$. As an application of Theorem \ref{the1}, we consider a multidimensional inverse scattering. The high-velocity limit of the scattering operator uniquely determines the interaction potentials that satisfy the short-range condition below by using the Enss--Weder time-dependent method (Enss--Weder \cite{EnWe}).

\begin{Ass}\label{ass1}
$V\in C^1(\mathbb{R}^n)$ is real-valued and for $\gamma>1$, satisfies
\begin{equation}
|\partial_x^\beta V(x)|\leqslant C_\beta \langle x\rangle^{-\gamma-|\beta|},\quad |\beta|\leqslant1,\label{potential}
\end{equation}
where the bracket of $x$ has the usual definition $\langle x\rangle=\sqrt{1+|x|^2}$.
\end{Ass}
For the full Hamiltonian $H_\rho=H_{0,\rho}+V$, where $V$ belongs to the class stated above, the existence of the wave operators
\begin{equation}
W_\rho^\pm=\slim_{t\rightarrow \pm \infty}e^{itH_\rho}e^{-itH_{0,\rho}}\label{wave_operators}
\end{equation}
and their asymptotic completeness have already been proved in Kitada \cite{Ki1,Ki2}. Thus, we can define the scattering operator $S_\rho=S_\rho(V)$ by
\begin{equation}
S_\rho=(W_\rho^+)^*W_\rho^-.
\end{equation}
Under these situations, the following uniqueness theorem can be proved.

\begin{The}\label{the2}
Let $V_1$ and $V_2$ be interaction potentials which satisfy Assumption \ref{ass1}. If $S_\rho(V_1)=S_\rho(V_2)$, then $V_1=V_2$ holds for $1/2<\rho\leqslant1$.
\end{The}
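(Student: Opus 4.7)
The strategy is to apply the Enss--Weder time-dependent method, with Theorem~\ref{the1} playing the role of the propagation estimate of \cite{En}. Set $W=V_1-V_2$, assume $S_\rho(V_1)=S_\rho(V_2)$, and aim to conclude $W\equiv 0$. Fix auxiliary Schwartz states $\Phi_0,\Psi_0$ whose Fourier transforms are supported in a small ball $\{|\xi|\le\eta\}$, and form the high-velocity pair $\Phi_v=e^{iv\cdot x}\Phi_0$, $\Psi_v=e^{iv\cdot x}\Psi_0$, whose momenta concentrate around $v$ in the form required by Theorem~\ref{the1}. Starting from the Dyson representation
\begin{equation*}
\langle(S_\rho(V)-I)\Phi_v,\Psi_v\rangle=-i\int_{-\infty}^{\infty}\langle Ve^{-itH_{0,\rho}}\Phi_v,e^{-itH_\rho}W_\rho^+\Psi_v\rangle\,dt,
\end{equation*}
the task reduces, modulo controllable errors, to the first Born approximation obtained by replacing $e^{-itH_\rho}W_\rho^+\Psi_v$ by $e^{-itH_{0,\rho}}\Psi_v$.

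Using the intertwining identity $e^{-itH_{0,\rho}}e^{iv\cdot x}=e^{iv\cdot x}e^{-it\omega_\rho(D_x+v)}$ and the Taylor expansion $\omega_\rho(\xi+v)=\omega_\rho(v)+(\nabla\omega_\rho)(v)\cdot\xi+R_v(\xi)$, the scalar phase $\omega_\rho(v)$ cancels between bra and ket, the linear term translates by $(\nabla\omega_\rho)(v)t=|v|^{2\rho-1}\hat{v}\,t$, and $R_v(\xi)=O(|v|^{2\rho-2})$ on $|\xi|\le\eta$ contributes only a lower-order correction. After the change of variable $\sigma=|v|^{2\rho-1}t$ one arrives at the reconstruction formula
\begin{equation*}
\lim_{|v|\to\infty}i|v|^{2\rho-1}\langle(S_\rho(V)-I)\Phi_v,\Psi_v\rangle=\int_{-\infty}^{\infty}\langle V(\cdot+\hat{v}\sigma)\Phi_0,\Psi_0\rangle\,d\sigma.
\end{equation*}
Under the hypothesis $S_\rho(V_1)=S_\rho(V_2)$, the left sides coincide for $V_1$ and $V_2$, hence $\int_{\mathbb{R}}\langle W(\cdot+\hat{v}\sigma)\Phi_0,\Psi_0\rangle\,d\sigma=0$ for every direction $\hat{v}$ and every admissible pair $\Phi_0,\Psi_0$. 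Letting $\Phi_0\overline{\Psi_0}$ approximate the delta at an arbitrary $y_0\in\mathbb{R}^n$ yields the vanishing of the X-ray transform $\int_{\mathbb{R}}W(y_0+\hat{v}\sigma)\,d\sigma=0$ in every direction. Since $n\ge 2$, injectivity of the X-ray transform gives $W\equiv 0$.

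The main technical obstacle is the rigorous error analysis: three sources of residual contributions must each be shown to be $o(|v|^{-(2\rho-1)})$. These are (i) the replacement $e^{-itH_\rho}W_\rho^+\Psi_v\rightsquigarrow e^{-itH_{0,\rho}}\Psi_v$, handled by a Cook-type bound controlled by $\int\|Ve^{-itH_{0,\rho}}\Psi_v\|\,dt$; (ii) the higher-order Born contributions to the Dyson expansion of $S_\rho-I$; and (iii) the nonlinear remainder $R_v(D_x)$ inside the inner propagator. The key norms $\|Ve^{-itH_{0,\rho}}\Phi_v\|$ are controlled by the Enss decomposition $I=F(|x|\le|v|^{2\rho-1}|t|/16)+F(|x|\ge|v|^{2\rho-1}|t|/16)$: on the outer region the bound $\|VF(|x|\ge R)\|\le C\langle R\rangle^{-\gamma}$ from Assumption~\ref{ass1} with $\gamma>1$ supplies an integrable tail in $t$, while on the inner region Theorem~\ref{the1} provides polynomial decay of arbitrary order in $|v|^{2\rho-1}|t|$. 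This is the precise point at which $\rho>1/2$ enters: the leading reconstruction integral has amplitude $|v|^{-(2\rho-1)}$, and one needs $|v|^{2\rho-1}\to\infty$ as $|v|\to\infty$ both to extract that integral and to dominate the error terms, which fails at the massless relativistic case $\rho=1/2$.
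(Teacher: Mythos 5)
Your overall strategy coincides with the paper's: the Enss--Weder representation of $S_\rho-1$, reduction to the first Born approximation via a Cook-type bound (this is Proposition \ref{prop} and Corollary \ref{cor}, both resting on Theorem \ref{the1}), the reconstruction formula \eqref{reconstruction}, and injectivity of the X-ray/Radon transform for $n\geqslant2$. Items (i) and (ii) of your error list are indeed handled exactly as you indicate. The genuine gap is item (iii), which you dismiss as ``only a lower-order correction'' but which is in fact the bulk of the paper's proof of Theorem \ref{the3}. Replacing $(\nabla_\xi\omega_\rho)(D_x+v)$ by $(\nabla_\xi\omega_\rho)(v)$ produces an error of size $O(|t|\,|v|^{2\rho-2})$, not $O(|v|^{2\rho-2})$: the Taylor remainder $R_v(\xi)$ enters the phase multiplied by $t$. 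If you control it using only the decay $\|V F(|x|\geqslant R)\|\leqslant C\langle R\rangle^{-\gamma}$ from your sketch, the resulting contribution to the reconstruction formula is
\begin{equation*}
|v|^{2\rho-1}\int_{\mathbb{R}}|t|\,|v|^{2\rho-2}\,\langle|v|^{2\rho-1}t\rangle^{-\gamma}\,dt
=|v|^{-1}\int_{\mathbb{R}}|\tau|\,\langle\tau\rangle^{-\gamma}\,d\tau,
\end{equation*}
and the $\tau$-integral diverges throughout the admissible range $1<\gamma\leqslant2$. So the assertion that this term is $o(|v|^{-(2\rho-1)})$ does not follow from what you have written.

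The paper closes this gap by using the $C^1$ regularity in Assumption \ref{ass1}: the difference $V(x+(\nabla_\xi\omega_\rho)(\xi+v)t)-V(x+(\nabla_\xi\omega_\rho)(v)t)$ is written via the mean value theorem, so that $\nabla_xV$ appears with the improved decay $\langle\cdot\rangle^{-\gamma-1}$ (this is precisely where the $|\beta|=1$ case of \eqref{potential} is needed); the argument $\psi_v(t,x,D_x)$ is then localized in the region $|\psi_v|\gtrsim|v|^{2\rho-1}|t|$ by a pseudo-differential expansion, and the time integral is split at $|t|=|v|^{-\sigma}$ and $|t|=|v|^{1-2\rho}$ with $\sigma>2\rho-1$ chosen at the end to make every error exponent negative. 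Your proposal never invokes the derivative bound on $V$, so as written the main error term is uncontrolled. Once that analysis is supplied, the remainder of your argument --- including the final Radon-transform step, which the paper also delegates to Enss--Weder --- goes through.
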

We note that $\rho=1/2$ is excluded in this theorem. As mentioned before, if $\rho=1/2$, the system is relativistic and the speed of light is always equal to 1, that is, $|v|\equiv1$. The Enss--Weder time-dependent method is also called the high-velocity method. As the name suggests, deriving the uniqueness of the interaction potentials requires the limit of $|v|$. Therefore, this method does not combine well with relativistic phenomena (see also Jung \cite{Ju}).\par
In Enss--Weder \cite{EnWe}, the estimate \eqref{enss} was demonstrated to be very useful for inverse scattering and the Enss--Weder time-dependent method was developed. Since then, the uniqueness of the interaction potentials for various quantum systems has been studied by many authors (Weder \cite{We3}, Jung \cite{Ju}, Nicoleau \cite{Ni1,Ni2,Ni3}, Adachi--Maehara \cite{AdMa}, Adachi--Kamada--Kazuno--Toratani \cite{AdKaKaTo}, Valencia--Weder \cite{VaWe}, Adachi--Fujiwara--Ishida \cite{AdFuIs}, and Ishida \cite{Is}). This paper is motivated by their results. In particular, Enss--Weder \cite{EnWe} first proved the uniqueness of the potentials for $\rho=1$ by applying \eqref{enss}. Jung \cite{Ju} treated $\rho=1/2$ using a different approach. Naturally, we cannot consider the limit of the velocity in this case. However, Jung \cite{Ju} obtained the uniqueness without using an estimate of the type \eqref{fractional}. Thus, Theorem \ref{the2} represents an interpolation between the results of Enss--Weder \cite{EnWe} and Jung \cite{Ju}.

\section{Propagation Property}
In this section, we prove Theorem \ref{the1}. Regarding estimate \eqref{enss}, the idea of Enss \cite{En} is very simple and understandable. The Galilean transformation in the direction of $v$ enables a reduction to a static system, and iterations of the integration by parts, by taking the points of stationary phase into account, leads to \eqref{enss}. However, in our case, these ingredients do not work well because of the fractional powers. Instead, our main strategy is the asymptotic expansion of the symbolic calculus of pseudo-differential theory.\par

Here, we recall several basics of the calculus of pseudo-differential operators. They are recounted from standard textbooks. For $m\in \mathbb{R}$, let $S^m_{1,0}$ be the H\"ormander symbol class, that is, we say $p\in S^m_{1,0}$ if and only if $p\in C^\infty(\mathbb{R}^n_x\times\mathbb{R}^n_\xi)$ and, for any multi-indices $\beta$ and $\beta'$,
\begin{equation}
|\partial_x^{\beta'}\partial_\xi^\beta p(x,\xi)|\leqslant C_{\beta\beta'}\langle \xi\rangle^{m-|\beta|}
\end{equation}
are satisfied. Then, the pseudo-differential operator $p(x,D_x)$ with symbol $p\in S^m_{1,0}$ is defined by
\begin{equation}
p(x,D_x)\phi(x)=\int_{\mathbb{R}^n}e^{ix\cdot\xi}p(x,\xi)(\mathscr{F}\phi)(\xi)d\xi/(2\pi)^{n/2}
\end{equation}
for $\phi\in\mathscr{S}(\mathbb{R}^n)$ which is the Schwartz functional space. When $p\in S^m_{1,0}$, we denote the semi-norm $|p|_{m,k}$ by
\begin{equation}
|p|_{m,k}=\sup_{x,\xi\in\mathbb{R}^n}\sum_{|\beta|+|\beta'|\leqslant k}\langle \xi\rangle^{-m+|\beta|}|\partial_x^{\beta'}\partial_\xi^\beta p(x,\xi)|.
\end{equation}
If $p_1\in S^{m_1}_{1,0}$ and $p_2\in S^{m_2}_{1,0}$, then the symbol of the product $p_1p_2=q\in S^{m_1+m_2}_{1,0}$ has the following asymptotic expansion
\begin{equation}
q(x,\xi)=\sum_{|\beta|\leqslant N-1}\partial_\xi^\beta p_1(x,\xi)\times(-i\partial_x)^\beta p_2(x,\xi)/\beta!+r_N(x,\xi),\label{product_formula}
\end{equation} 
where the remainder $r_N$ satisfies $r_N\in S^{m_1+m_2-N}_{1,0}$ and
\begin{eqnarray}
\lefteqn{|\partial_x^{\beta'}\partial_\xi^\beta r_N(x,\xi)|}\nonumber\\
\Eqn{\leqslant}C_{\beta\beta'N}\sum_{|\alpha|=N}|\partial_\xi^{\alpha} p_1|_{m_1-N,M+|\beta|+|\beta'|}|\partial_x^{\alpha} p_2|_{m_2,M+|\beta|+|\beta'|}\langle\xi\rangle^{m_1+m_2-N-|\beta|}.
\end{eqnarray}
for some $M\in\mathbb{N}$ (Chapter 8 in Wong \cite{Wo}). Moreover, by the $L^2$-boundedness theorem, if $m_1+m_2-N\leqslant0$, then there exists $K\in\mathbb{N}$ such that the operator-norm of $r_N$ is estimated by
\begin{eqnarray}
\lefteqn{\|r_N(x,D_x)\|\leqslant C_N|r_N|_{m_1+m_2-N,K}}\nonumber\\
\Eqn{}\leqslant C_N\sup_{x,\xi\in\mathbb{R}^n}\sum_{|\beta|+|\beta'|\leqslant K}\langle \xi\rangle^{-m_1-m_2+N+|\beta|}|\partial_x^{\beta'}\partial_\xi^\beta r_N(x,\xi)|\nonumber\\
\Eqn{}\leqslant C_N\sup_{x,\xi\in\mathbb{R}^n}\sum_{\substack{|\beta|+|\beta'|\leqslant K\\ |\alpha|=N}}|\partial_\xi^{\alpha} p_1|_{m_1-N,M+|\beta|+|\beta'|}|\partial_x^{\alpha} p_2|_{m_2,M+|\beta|+|\beta'|}\label{bounds}
\end{eqnarray}
(Theorem 3.36, Lemma 3.37--3.39 and Remark 3.40 in Abels in \cite{Ab})

\begin{proof}[Proof of Theorem \ref{the1}]
The left-hand side of \eqref{fractional} is bounded uniformly in $t$ and $v$. Therefore, it is sufficient to prove
\begin{eqnarray}
\lefteqn{\left\|\chi\left(\frac{x-(\nabla_\xi\omega_\rho)(v)t}{|v|^{2\rho-1}|t|/4}\right)e^{-itH_{0,\rho}}f(D_x-v)F\left(|x|\leqslant\frac{|v|^{2\rho-1}|t|}{16}\right)\right\|}\nonumber\\
\Eqn{}\hspace{80mm}\leqslant C_N(|v|^{2\rho-1}|t|)^{-N}\label{fractional2}
\end{eqnarray}
for $|v|^{2\rho-1}|t|\geqslant1$. By using the unitary translations, we have the following relations
\begin{gather}
e^{iv\cdot x}D_xe^{-iv\cdot x}=D_x-v,\label{eq1.1.1}\\
e^{it\omega_\rho(D_x+v)}xe^{-it\omega_\rho(D_x+v)}=x+(\nabla_\xi\omega_\rho)(D_x+v)t.\label{eq1.1.2}
\end{gather}
We thus compute that
\begin{eqnarray}
\lefteqn{\chi\left(\frac{x-(\nabla_\xi\omega_\rho)(v)t}{|v|^{2\rho-1}|t|/4}\right)e^{-itH_{0,\rho}}f(D_x-v)}\nonumber\\
\Eqn{=}e^{iv\cdot x}\chi\left(\frac{x-(\nabla_\xi\omega_\rho)(v)t}{|v|^{2\rho-1}|t|/4}\right)e^{-it\omega_\rho(D_x+v)}f(D_x)e^{-iv\cdot x}\nonumber\\
\Eqn{=}e^{iv\cdot x}e^{-it\omega_\rho(D_x+v)}\chi\left(\frac{x+(\nabla_\xi\omega_\rho)(D_x+v)t-(\nabla_\xi\omega_\rho)(v)t}{|v|^{2\rho-1}|t|/4}\right)f(D_x)e^{-iv\cdot x}.\quad\label{eq1.1.3}
\end{eqnarray}
The strategy of our proof is as follows. The momentum operator $D_x$ can move inside the compact region only because $f$ is compactly supported. Therefore, $(\nabla_\xi\omega_\rho)(D_x+v)$ and $(\nabla_\xi\omega_\rho)(v)$ almost cancel when $|v|$ is sufficiently large, and the function $\chi$ in \eqref{eq1.1.3} behaves as though
\begin{equation}
\chi\left(\frac{x+(\nabla_\xi\omega_\rho)(D_x+v)t-(\nabla_\xi\omega_\rho)(v)t}{|v|^{2\rho-1}|t|/4}\right)\sim\chi\left(\frac{x}{|v|^{2\rho-1}|t|/4}\right).
\end{equation}
We now justify this strategy. Because $|\xi|\leqslant\eta$ on the support of $f$, we have
\begin{equation}
|\xi+v|\geqslant|v|-|\xi|\geqslant|v|-\eta>0.
\end{equation}
This inequality implies
\begin{equation}
\chi\left(\frac{x+(\nabla_\xi\omega_\rho)(\xi+v)t-(\nabla_\xi\omega_\rho)(v)t}{|v|^{2\rho-1}|t|/4}\right)f(\xi)\in C^\infty(\mathbb{R}^n_x\times\mathbb{R}^n_\xi).
\end{equation}
Moreover, when $1/2\leqslant\rho<1$,
\begin{equation}
|(\nabla_\xi\omega_\rho)(\xi+v)-(\nabla_\xi\omega_\rho)(v)|\leqslant\int_0^1|(\nabla_\xi^2\omega_\rho)(v+\theta\xi)|d\theta|\xi|\label{eq1.1.a}
\end{equation}
and
\begin{equation}
|(\nabla_\xi^2\omega_\rho)(v+\theta\xi)|=\max_{1\leqslant j\leqslant n}\sum_{k=1}^n|(\partial_{\xi_j}\partial_{\xi_k}\omega_\rho)(v+\theta\xi)|\leqslant2n(1-\rho)(|v|-\eta)^{2\rho-2}.\label{eq1.1.b}
\end{equation}
hold for $|\xi|\leqslant\eta$, where $\nabla_\xi^2\omega_\rho$ denotes the Hessian matrix of $\omega_\rho$. In the case where $\rho=1$, it is clear that
\begin{equation}
|(\nabla_\xi\omega_1)(\xi+v)-(\nabla_\xi\omega_1)(v)|=|\xi|.\label{eq1.1.c}
\end{equation}
We thus obtain, for $1/2\leqslant\rho\leqslant1$ and $|v|$ which satisfies \eqref{size_v},
\begin{equation}
|(\nabla_\xi\omega_\rho)(\xi+v)-(\nabla_\xi\omega_\rho)(v)|\leqslant|v|^{2\rho-1}/8.\label{eq1.1.4}
\end{equation}
It follows from \eqref{eq1.1.4} that
\begin{eqnarray}
|x|\Eqn{\geqslant}|x+(\nabla_\xi\omega_\rho)(\xi+v)t-(\nabla_\xi\omega_\rho)(v)t|-|(\nabla_\xi\omega_\rho)(\xi+v)-(\nabla_\xi\omega_\rho)(v)||t|\nonumber\\
\Eqn{\geqslant}|v|^{2\rho-1}|t|/4-|v|^{2\rho-1}|t|/8=|v|^{2\rho-1}|t|/8,\label{eq1.1.5}
\end{eqnarray}
on the supports of $f$ and $\chi$. This means that
\begin{eqnarray}
\lefteqn{\chi\left(\frac{x+(\nabla_\xi\omega_\rho)(\xi+v)t-(\nabla_\xi\omega_\rho)(v)t}{|v|^{2\rho-1}|t|/4}\right)f(\xi)}\nonumber\\
\Eqn{=}\chi\left(\frac{x+(\nabla_\xi\omega_\rho)(\xi+v)t-(\nabla_\xi\omega_\rho)(v)t}{|v|^{2\rho-1}|t|/4}\right)f(\xi)\chi\left(\frac{x}{|v|^{2\rho-1}|t|/16}\right)\label{eq1.1.6}
\end{eqnarray}
because $\chi(x/(|v|^{2\rho-1}|t|/16))=1$ by \eqref{eq1.1.5}. However, in the pseudo-differential calculus, the product of the symbols is not equal to the symbol of the product. The additional asymptotic error terms arise. By the product formula \eqref{product_formula}, the symbol of \eqref{eq1.1.6} becomes
\begin{eqnarray}
\lefteqn{\sum_{|\beta|\leqslant N-1}\frac{1}{\beta!}\partial_\xi^\beta\left\{\chi\left(\frac{x+(\nabla_\xi\omega_\rho)(\xi+v)t-(\nabla_\xi\omega_\rho)(v)t}{|v|^{2\rho-1}|t|/4}\right)f(\xi)\right\}}\nonumber\\
\Eqn{}\hspace{35mm}\times(-i\partial_x)^\beta\chi\left(\frac{x}{|v|^{2\rho-1}|t|/16}\right)+R_N(t,x,\xi)
\end{eqnarray}
for any $N\in\mathbb{N}$. All terms with $|\beta|\leqslant N-1$ vanish due to another characteristic function
\begin{equation}
\left\{\partial_x^\beta\chi\left(\frac{x}{|v|^{2\rho-1}|t|/16}\right)\right\}F\left(|x|\leqslant\frac{|v|^{2\rho-1}|t|}{16}\right)=0.
\end{equation}
Next, we consider the remainder term $R_N$. Because $f$ is compactly supported,
\begin{equation}
\chi\left(\frac{x+(\nabla_\xi\omega_\rho)(\xi+v)t-(\nabla_\xi\omega_\rho)(v)t}{|v|^{2\rho-1}|t|/4}\right)f(\xi)\in S_{1,0}^{-\infty}=\bigcap_{-\infty<m<\infty}S_{1,0}^m
\end{equation}
holds. Clearly, $\chi(x/(|v|^{2\rho-1}|t|/16))\in S_{1,0}^0$ also holds. In particular, we see that
\begin{equation}
\left|\partial_\xi^\beta\chi\left(\frac{x+(\nabla_\xi\omega_\rho)(\xi+v)t-(\nabla_\xi\omega_\rho)(v)t}{|v|^{2\rho-1}|t|/4}\right)\right|\leqslant C_\beta|v|^{-(2\rho-1)}\leqslant C_\beta
\end{equation}
for all $\beta$ with $|\beta|\geqslant1$. Here, $C_\beta>0$ is independent of $t$ and $v$. Therefore, it is sufficient to focus only on the derivative at $x$. By the estimate of the remainder \eqref{bounds}, there exists $N'\in\mathbb{N}$ such that
\begin{eqnarray}
\|R_N(t,x,D_x)\|\Eqn{\leqslant}C_N\sum_{0\leqslant j\leqslant N'}(|v|^{2\rho-1}|t|)^{-j}\times\sum_{N\leqslant j\leqslant N+N'}(|v|^{2\rho-1}|t|)^{-j}\nonumber\\
\Eqn{\leqslant}C_N(|v|^{2\rho-1}|t|)^{-N}\label{eq1.1.7}
\end{eqnarray}
because $|v|^{2\rho-1}|t|\geqslant1$. This completes the proof.
\end{proof}

\section{Uniqueness of Interactions}
To apply the Enss--Weder time-dependent method, we have to assume that $n\geqslant2$ and that $\rho>1/2$ from here on. The following Radon transformation-type reconstruction formula enables Theorem \ref{the2} to be proved. We devote ourselves to proving Theorem \ref{the3} in this section. Contrary to Enss--Weder \cite{EnWe}, the key calculation in our proof is the pseudo-differential asymptotic expansion as in Theorem \ref{the1}.

\begin{The}\label{the3}
Let $v\in\mathbb{R}^n$ be given and let $\hat{v}=v/|v|$. Suppose that $\eta>0$, and that $\Phi_0,\Psi_0\in L^2(\mathbb{R}^n)$ such that $\mathscr{F}\Phi_0,\mathscr{F}\Psi_0\in C_0^\infty(\mathbb{R}^n)$ with $\supp\mathscr{F}\Phi_0,\supp\mathscr{F}\Psi_0\subset\{\xi\in\mathbb{R}^n\bigm||\xi|\leqslant\eta\}$. Let $\Phi_v=e^{iv\cdot x}\Phi_0,\Psi_v=e^{iv\cdot x}\Psi_0$. Then
\begin{equation}
|v|^{2\rho-1}(i(S_\rho-1)\Phi_v,\Psi_v)=\int_{-\infty}^\infty(V(x+\hat{v}t)\Phi_0,\Psi_0)dt+o(1)\label{reconstruction}
\end{equation}
holds as $|v|\rightarrow \infty$ for any $V$ which satisfies Assumption \ref{ass1}, where $(\cdot,\cdot)$ is the scalar product of $L^2(\mathbb{R}^n)$.
\end{The}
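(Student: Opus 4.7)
The plan is to reduce the left side of \eqref{reconstruction} to an integral of $V$ along the classical free orbit by combining the Cook--Kuroda formula with a Galilean change of variables, and to use Theorem \ref{the1} to supply the decay needed to control the remainders. First, starting from $S_\rho - 1 = (W_\rho^+)^*(W_\rho^- - W_\rho^+)$, Cook's identity $W_\rho^- - W_\rho^+ = -i\int_{-\infty}^\infty e^{itH_\rho}Ve^{-itH_{0,\rho}}\,dt$ and the intertwining relation $(W_\rho^+)^* e^{itH_\rho} = e^{itH_{0,\rho}}(W_\rho^+)^*$ give
\begin{equation*}
(i(S_\rho-1)\Phi_v, \Psi_v) = \int_{-\infty}^{\infty}(Ve^{-itH_{0,\rho}}\Phi_v,\,W_\rho^+ e^{-itH_{0,\rho}}\Psi_v)\,dt.
\end{equation*}
Decomposing $W_\rho^+ = I + (W_\rho^+ - I)$ splits this into a main term $M(v)$ and an error term $E(v)$; I will show $|v|^{2\rho-1}M(v)$ produces the right-hand side of \eqref{reconstruction} and $|v|^{2\rho-1}E(v) \to 0$.

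Both halves rest on the a priori bound $\|Ve^{-itH_{0,\rho}}\Phi_v\| \leqslant C(1+|v|^{2\rho-1}|t|)^{-\gamma}$, which I will extract from Theorem \ref{the1} and Assumption \ref{ass1} by writing $\Phi_v = f(D_x-v)\Phi_v$ for a cutoff $f\in C_0^\infty$ equal to $1$ on $\supp\mathscr{F}\Phi_0$, inserting the spatial partition $F(|x|\leqslant|v|^{2\rho-1}|t|/16)$ and its complement (the latter being small by Schwartz decay of $\Phi_0$), and splitting $V = V\chi(\cdot) + V(1-\chi(\cdot))$ with the $\chi$ of \eqref{fractional}; on $\supp(1-\chi(\cdot))$ one has $|x|\geqslant|v|^{2\rho-1}|t|/2$, so Assumption \ref{ass1} delivers the required $\langle x\rangle^{-\gamma}$ decay, while the $\chi$-piece is handled directly by \eqref{fractional}. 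An identical bound holds for $\Psi_v$. For the error, Cook's identity applied to $W_\rho^+ - I$ yields $\|(W_\rho^+ - I)e^{-itH_{0,\rho}}\Psi_v\| \leqslant \int_{-\infty}^\infty\|Ve^{-isH_{0,\rho}}\Psi_v\|\,ds$, so that Cauchy--Schwarz gives
\begin{equation*}
|E(v)| \leqslant \Bigl(\int_{-\infty}^{\infty}\|Ve^{-itH_{0,\rho}}\Phi_v\|\,dt\Bigr)\Bigl(\int_{-\infty}^{\infty}\|Ve^{-isH_{0,\rho}}\Psi_v\|\,ds\Bigr).
\end{equation*}
The substitution $u = |v|^{2\rho-1}t$ together with $\gamma>1$ makes each factor $O(|v|^{-(2\rho-1)})$, whence $|v|^{2\rho-1}|E(v)| = O(|v|^{-(2\rho-1)})\to 0$. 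This is the step that forces $\rho > 1/2$.

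For $M(v)$, I use the Galilean relations \eqref{eq1.1.1}--\eqref{eq1.1.2} to write $e^{-itH_{0,\rho}}\Phi_v = e^{iv\cdot x}e^{-it\omega_\rho(D_x+v)}\Phi_0$ and Taylor-expand $\omega_\rho(\xi+v) = \omega_\rho(v) + |v|^{2\rho-1}\hat v\cdot\xi + R_v(\xi)$. The Hessian bound \eqref{eq1.1.b} already used in the proof of Theorem \ref{the1} gives $|R_v(\xi)|\leqslant C|v|^{2\rho-2}$ on $|\xi|\leqslant\eta$. Since the three summands are Fourier multipliers they commute, so $e^{-it\omega_\rho(D_x+v)}$ factors into the scalar phase $e^{-it\omega_\rho(v)}$ (which cancels in the inner product), the translation $\phi(x)\mapsto\phi(x-t|v|^{2\rho-1}\hat v)$, and a small residual multiplier $e^{-itR_v(D_x)}$. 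After the spatial substitution $y=x-t|v|^{2\rho-1}\hat v$ and the temporal rescaling $s=|v|^{2\rho-1}t$,
\begin{equation*}
|v|^{2\rho-1}M(v) = \int_{-\infty}^{\infty}(V(\cdot+s\hat v)B_v(s)\Phi_0,\,B_v(s)\Psi_0)\,ds, \quad B_v(s) = e^{-i(s/|v|^{2\rho-1})R_v(D_x)}.
\end{equation*}
The estimate $|1-e^{-i\theta}|\leqslant|\theta|$ combined with the Hessian bound gives $\|(B_v(s)-I)\Phi_0\|\leqslant C|s|/|v|$, hence $B_v(s)\to I$ strongly for each fixed $s$ as $|v|\to\infty$. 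At the same time the integrand equals $(Ve^{-itH_{0,\rho}}\Phi_v, e^{-itH_{0,\rho}}\Psi_v)$ at $t = s/|v|^{2\rho-1}$, so by the propagation bound above it is dominated by the $v$-independent integrable function $C(1+|s|)^{-\gamma}$. Dominated convergence then gives \eqref{reconstruction}. The main technical point is exactly this last step: arranging that Theorem \ref{the1} transforms under the Galilean rescaling into a $v$-uniform dominating function for the integrand, which is the reason the propagation estimate is formulated with the natural physical scale $|v|^{2\rho-1}|t|$ in the first place.
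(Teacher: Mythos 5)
Your proof is correct, and while your treatment of the error term coincides with the paper's (your a priori bound $\|Ve^{-itH_{0,\rho}}\Phi_v\|\leqslant C\langle|v|^{2\rho-1}t\rangle^{-\min(\gamma,2)}$ is exactly the content of the paper's Proposition \ref{prop}, and your bound on $E(v)$ is its Corollary \ref{cor} plus an $L^1$--$L^\infty$ estimate, not really Cauchy--Schwarz), your handling of the main term is genuinely different from the paper's. The paper conjugates $V$ by the full free group to obtain the pseudo-differential operator $V(x+(\nabla_\xi\omega_\rho)(D_x+v)t)$, Taylor-expands $V$ to first order (this is where the hypothesis $V\in C^1$ with $|\nabla V|\lesssim\langle x\rangle^{-\gamma-1}$ is consumed), and then controls the difference through the symbolic product formula, commutator estimates, and a delicate splitting of the time integral by an auxiliary exponent $\sigma\in(2\rho-1,\,2\rho-1+\delta)$. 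You instead factor the multiplier exactly, $e^{-it\omega_\rho(\xi+v)}=e^{-it\omega_\rho(v)}e^{-it(\nabla_\xi\omega_\rho)(v)\cdot\xi}e^{-itR_v(\xi)}$ on $\supp\mathscr{F}\Phi_0$, so that the translation acts on $V$ exactly and only the residual unitary $B_v(s)$ must be shown to converge strongly to the identity, via $|1-e^{-i\theta}|\leqslant|\theta|$ and the Hessian bound $|R_v(\xi)|\leqslant C|v|^{2\rho-2}$. This is the direct generalization of the original Enss--Weder argument for $\rho=1$ (where $R_v(\xi)=|\xi|^2/2$ and $B_v(s)=e^{-is|D_x|^2/(2|v|)}\to I$), it never touches $\nabla V$ --- so it proves the reconstruction formula under the weaker hypothesis $|V(x)|\leqslant C\langle x\rangle^{-\gamma}$ alone --- and it replaces the paper's $\sigma$-bookkeeping by dominated convergence. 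One small point of care: since the limit $\int(V(x+\hat{v}t)\Phi_0,\Psi_0)\,dt$ itself depends on $\hat v$, you should not literally invoke dominated convergence along a sequence but rather combine your two bounds, $C|s|/|v|$ for the pointwise difference and $C(1+|s|)^{-\min(\gamma,2)}$ for the tails, splitting the $s$-integral at $|s|=|v|^{\alpha}$ with $0<\alpha<1/2$; this gives the $o(1)$ (indeed an explicit rate) uniformly in the direction $\hat v$, which is what \eqref{reconstruction} asserts.
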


We first prepare the propagation estimate of the following integral form. In the proof of this proposition, we can see that Theorem \ref{the1} plays an important role. While $\|\cdot\|$ also indicates the norm in $L^2(\mathbb{R}^n)$, for simplicity, we do not distinguish between the notations for the usual $L^2$-norm and its operator norm in this paper.

\begin{Prop}\label{prop}
Let $v$ and $\Phi_v$ be as in Theorem \ref{the3}. Then
\begin{equation}
\int_{-\infty}^\infty\|V(x)e^{-itH_{0,\rho}}\Phi_v\|dt=O(|v|^{1-2\rho})
\end{equation}
holds as $|v|\rightarrow\infty$ for any $V$ which satisfies Assumption \ref{ass1}.
\end{Prop}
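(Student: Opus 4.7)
The plan is to reduce the proposition to the pointwise bound
\[
\|V(x)e^{-itH_{0,\rho}}\Phi_v\|\leqslant C(1+|v|^{2\rho-1}|t|)^{-\gamma}
\]
for all $t\in\mathbb{R}$ and all sufficiently large $|v|$; once this is in hand, the change of variable $s=|v|^{2\rho-1}t$ together with the assumption $\gamma>1$ makes the time integral equal to $2|v|^{1-2\rho}\int_0^\infty(1+s)^{-\gamma}ds$, which is precisely the claimed $O(|v|^{1-2\rho})$. Before splitting, I would fix once and for all an auxiliary cutoff $f\in C_0^\infty(\mathbb{R}^n)$ supported in $\{|\xi|\leqslant\eta\}$ with $f\equiv 1$ on $\supp\mathscr{F}\Phi_0$. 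Then $f(D_x-v)\Phi_v=e^{iv\cdot x}f(D_x)\Phi_0=\Phi_v$, and $f(D_x-v)$ commutes with the free propagator, so I may freely insert $f(D_x-v)$ between $e^{-itH_{0,\rho}}$ and $\Phi_v$.

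The core of the proof is a dynamical decomposition of $V$ using the same smooth characteristic function that appears in Theorem \ref{the1}. Set
\[
\chi_t(x)=\chi\!\left(\frac{x-(\nabla_\xi\omega_\rho)(v)t}{|v|^{2\rho-1}|t|/4}\right),
\]
and write $V=V(1-\chi_t)+V\chi_t$. On $\supp(1-\chi_t)$ one has $|x-(\nabla_\xi\omega_\rho)(v)t|\leqslant|v|^{2\rho-1}|t|/2$, and since $|(\nabla_\xi\omega_\rho)(v)|=|v|^{2\rho-1}$ the reverse triangle inequality gives $|x|\geqslant|v|^{2\rho-1}|t|/2$. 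Assumption \ref{ass1} then yields $\|V(1-\chi_t)\|_\infty\leqslant C(1+|v|^{2\rho-1}|t|)^{-\gamma}$, so this piece contributes to the target bound via the unitarity of $e^{-itH_{0,\rho}}$ and $\|\Phi_v\|=\|\Phi_0\|$.

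For the remaining term $V\chi_t e^{-itH_{0,\rho}}f(D_x-v)\Phi_v$ I would insert the partition $1=F(|x|\leqslant|v|^{2\rho-1}|t|/16)+F(|x|>|v|^{2\rho-1}|t|/16)$ to the right of $f(D_x-v)$. The first piece is controlled by Theorem \ref{the1} directly: the operator norm of
\[
\chi_t\,e^{-itH_{0,\rho}}f(D_x-v)F\!\left(|x|\leqslant |v|^{2\rho-1}|t|/16\right)
\]
is bounded by $C_N(1+|v|^{2\rho-1}|t|)^{-N}$ for any $N$, and this dominates the decay rate $\gamma$. For the second piece I use the fact that $\Phi_0=\mathscr{F}^*(\mathscr{F}\Phi_0)$ is a Schwartz function, so $\|F(|x|>R)\Phi_v\|=\|F(|x|>R)\Phi_0\|$ decays faster than any power of $R$; this piece is then harmless after multiplication by the bounded operator $V\chi_t e^{-itH_{0,\rho}}f(D_x-v)$. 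Summing the three contributions produces the pointwise bound.

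The only slightly delicate bookkeeping is to make the estimates uniform in the small-$|t|$ regime, where the factor $|v|^{2\rho-1}|t|$ is not large and Theorem \ref{the1} is not quantitatively useful; however, the trivial bound $\|Ve^{-itH_{0,\rho}}\Phi_v\|\leqslant\|V\|_\infty\|\Phi_0\|$ already matches $C(1+|v|^{2\rho-1}|t|)^{-\gamma}$ on any bounded time window, so no separate argument is needed. The main (and essentially only) obstacle is identifying the correct scale $|v|^{2\rho-1}$ in the cutoffs so that the geometric separation on $\supp(1-\chi_t)$ meshes with the hypothesis of Theorem \ref{the1}; once that scale is chosen, everything else is routine.
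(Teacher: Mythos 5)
Your proposal is correct and follows essentially the same route as the paper: the same cutoff $f(D_x-v)$, the same splitting $V=V(1-\chi_t)+V\chi_t$ with the reverse triangle inequality on $\supp(1-\chi_t)$, the same insertion of $F(|x|\leqslant|v|^{2\rho-1}|t|/16)$ to invoke Theorem \ref{the1}, and the same Schwartz-decay argument for the remaining piece. The only cosmetic difference is that you assemble a single pointwise bound $C(1+|v|^{2\rho-1}|t|)^{-\gamma}$ before integrating, whereas the paper integrates each of the three contributions separately.
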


\begin{proof}
The original idea of this proof is given in Lemma 2.2 of Enss--Weder \cite{EnWe}. We extend it to the case of the fractional powers of the negative Laplacian. Choose $f\in C_0^\infty(\mathbb{R}^n)$ such that $\mathscr{F}\Phi_0=f\mathscr{F}\Phi_0$ and $\supp f\subset\{\xi\in\mathbb{R}^n\bigm||\xi|\leqslant\eta\}$. Then the relation
\begin{equation}
\Phi_v=e^{iv\cdot x}\mathscr{F}^*f(\xi)\mathscr{F}\Phi_0=e^{iv\cdot x}f(D_x)\Phi_0=f(D_x-v)\Phi_v
\end{equation}
follows. We compute
\begin{equation}
\|V(x)e^{-itH_{0,\rho}}\Phi_v\|=\|V(x)e^{-itH_{0,\rho}}f(D_x-v)\Phi_v\|\leqslant I_1+I_2,
\end{equation}
where $I_1$ and $I_2$ are given by
\begin{eqnarray}
I_1\Eqn{=}\left\|V(x)\left\{1-\chi\left(\frac{x-(\nabla_\xi\omega_\rho)(v)t}{|v|^{2\rho-1}|t|/4}\right)\right\}e^{-itH_{0,\rho}}f(D_x-v)\Phi_v\right\|,\\
I_2\Eqn{=}\left\|V(x)\chi\left(\frac{x-(\nabla_\xi\omega_\rho)(v)t}{|v|^{2\rho-1}|t|/4}\right)e^{-itH_{0,\rho}}f(D_x-v)\Phi_v\right\|.
\end{eqnarray}
When $|x-(\nabla_\xi\omega_\rho)(v)t|\leqslant|v|^{2\rho-1}|t|/2$ holds, we have
\begin{equation}
|x|\geqslant|(\nabla_\xi\omega_\rho)(v)t|-|x-(\nabla_\xi\omega_\rho)(v)t|\geqslant|v|^{2\rho-1}|t|/2.\label{eq3.2.2}
\end{equation}
By virtue of the decay condition on $V$ in \eqref{potential} and inequality \eqref{eq3.2.2}, $I_1$ can be estimated as follows
\begin{equation}
\int_{-\infty}^{\infty}I_1dt\leqslant C\int_0^\infty\langle|v|^{2\rho-1}t\rangle^{-\gamma}dt=C|v|^{1-2\rho}\int_0^\infty\langle\tau\rangle^{-\gamma}d\tau=O(|v|^{1-2\rho})\label{eq3.2.3}
\end{equation}
because $\gamma>1$, where we changed the integral variable by $\tau=|v|^{2\rho-1}t$. We next estimate $I_2$. By inserting
\begin{equation}
F\left(|x|\leqslant\frac{|v|^{2\rho-1}|t|}{16}\right)+F\left(|x|>\frac{|v|^{2\rho-1}|t|}{16}\right)=1
\end{equation}
between $f(D_x-v)$ and $\Phi_v$, $I_2$ is estimated so that $I_2\leqslant I_{2,1}+I_{2,2}$ where $I_{2,1}$ and $I_{2,2}$ are given by
\begin{eqnarray}
I_{2,1}\Eqn{=}C\left\|\chi\left(\frac{x-(\nabla_\xi\omega_\rho)(v)t}{|v|^{2\rho-1}|t|/4}\right)e^{-itH_{0,\rho}}f(D_x-v)F\left(|x|\leqslant\frac{|v|^{2\rho-1}|t|}{16}\right)\right\|,\quad\\
I_{2,2}\Eqn{=}C\left\|F\left(|x|>\frac{|v|^{2\rho-1}|t|}{16}\right)\Phi_0\right\|.
\end{eqnarray}
By applying Theorem \ref{the1} to $I_{2.1}$ with $N=2$
\begin{equation}
\int_{-\infty}^{\infty}I_{2.1}dt\leqslant C\int_0^\infty\langle|v|^{2\rho-1}t\rangle^{-2}dt=C|v|^{1-2\rho}\int_0^\infty\langle\tau\rangle^{-2}d\tau=O(|v|^{1-2\rho})\label{eq3.2.4}
\end{equation}
is obtained. $I_{2,2}$ also provides the same estimate of \eqref{eq3.2.4}. Indeed, $I_{2,2}$ satisfies
\begin{equation}
I_{2,2}\leqslant C\left\|F\left(|x|>\frac{|v|^{2\rho-1}|t|}{16}\right)\langle x\rangle^{-2}\right\|\|\langle x\rangle^2\Phi_0\|\leqslant C\langle|v|^{2\rho-1}|t|\rangle^{-2}\label{eq3.2.5}
\end{equation}
because $\Phi_0\in\mathscr{S}(\mathbb{R}^n)$ by the assumption. Therefore, we obtain
\begin{equation}
\int_{-\infty}^{\infty}I_{2.2}dt=O(|v|^{1-2\rho})).\label{eq3.2.6}
\end{equation}
From \eqref{eq3.2.3}, \eqref{eq3.2.4}, and \eqref{eq3.2.6}, it follows that
\begin{equation}
\int_{-\infty}^\infty\|V(x)e^{-itH_{0,\rho}}\Phi_v\|dt\leqslant\int_{-\infty}^\infty(I_1+I_{2.1}+I_{2.2})dt=O(|v|^{1-2\rho}).
\end{equation}
\end{proof}

\begin{Cor}\label{cor}
Let $v$ and $\Phi_v$ be as in Theorem \ref{the3}. Then
\begin{equation}
\|(W_\rho^\pm-1)e^{-itH_{0,\rho}}\Phi_v\|=O(|v|^{1-2\rho})
\end{equation}
holds as $|v|\rightarrow\infty$ uniformly for $t\in\mathbb{R}$.
\end{Cor}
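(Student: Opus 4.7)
The plan is to apply the standard Cook--Duhamel argument to reduce the claim directly to Proposition \ref{prop}. Since $\mathscr{F}\Phi_0\in C_0^\infty(\mathbb{R}^n)$, the vector $\Phi_0$ lies in $\mathscr{S}(\mathbb{R}^n)$ and hence so does $\Phi_v=e^{iv\cdot x}\Phi_0$; in particular $\Phi_v$ is in the domain of $H_{0,\rho}$, and since $V$ is bounded by Assumption \ref{ass1}, also in the domain of $H_\rho=H_{0,\rho}+V$. This makes the formal manipulations below rigorous.

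First, I would write the Duhamel identity
\begin{equation}
e^{isH_\rho}e^{-isH_{0,\rho}}-1=i\int_0^s e^{iuH_\rho}V(x)e^{-iuH_{0,\rho}}\,du
\end{equation}
valid for any $s\in\mathbb{R}$, obtained by integrating $\frac{d}{du}(e^{iuH_\rho}e^{-iuH_{0,\rho}})=ie^{iuH_\rho}V(x)e^{-iuH_{0,\rho}}$. Applying this to $e^{-itH_{0,\rho}}\Phi_v$, letting $s\rightarrow\pm\infty$ (the strong limits defining $W_\rho^\pm$ exist by Kitada \cite{Ki1,Ki2}), and changing the variable of integration to $\sigma=u+t$, I obtain
\begin{equation}
(W_\rho^\pm-1)e^{-itH_{0,\rho}}\Phi_v=i\int_t^{\pm\infty}e^{i(\sigma-t)H_\rho}V(x)e^{-i\sigma H_{0,\rho}}\Phi_v\,d\sigma.
\end{equation}

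Next, I would take $L^2$-norms inside, use the unitarity of $e^{i(\sigma-t)H_\rho}$, and enlarge the one-sided integral to one over all of $\mathbb{R}$, giving
\begin{equation}
\|(W_\rho^\pm-1)e^{-itH_{0,\rho}}\Phi_v\|\leqslant\int_{-\infty}^\infty\|V(x)e^{-i\sigma H_{0,\rho}}\Phi_v\|\,d\sigma=O(|v|^{1-2\rho}),
\end{equation}
where the final estimate is exactly Proposition \ref{prop}. Since the right-hand side no longer depends on $t$, the bound is uniform in $t\in\mathbb{R}$, as required.

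There is no serious obstacle here; the only structural point worth highlighting is that the change of variables $\sigma=u+t$ converts the $t$-dependence of the free evolution inside the Cook integral into $t$-dependence of the integration interval alone, which is then trivially dominated by the full-line integral controlled by Proposition \ref{prop}. This is precisely the standard trick that makes Cook's method combine cleanly with a time-integrable propagation bound.
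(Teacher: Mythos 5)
Your proposal is correct and follows essentially the same route as the paper: the Cook--Duhamel representation of $W_\rho^\pm-1$, the change of variables shifting the $t$-dependence into the integration endpoint, and the domination by the full-line integral controlled by Proposition \ref{prop}. The only difference is that you spell out the domain considerations, which the paper leaves implicit.
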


\begin{proof}
The proof is similar to that of Corollary 2.3 in Enss--Weder \cite{EnWe}, and therefore is sketched as follows. The difference between $W_\rho^\pm$ and $1$ can be represented by the following integral form
\begin{eqnarray}
\lefteqn{(W_\rho^\pm-1)e^{-itH_{0,\rho}}=\int_0^{\pm\infty}\partial_\tau e^{i\tau H_\rho}e^{-i\tau H_{0,\rho}}d\tau e^{-itH_{0,\rho}}}\nonumber\\
\Eqn{=}i\int_0^{\pm\infty}e^{i\tau H_\rho}V(x)e^{-i(\tau+t)H_{0,\rho}}d\tau=i\int_t^{\pm\infty}e^{i(\tau'-t)H_\rho}V(x)e^{-i\tau'H_{0,\rho}}d\tau'.
\end{eqnarray}
In the last equation, we changed the integral variable $\tau'=\tau+t$. By using Proposition \ref{prop}, we have
\begin{equation}
\|(W_\rho^\pm-1)e^{-itH_{0,\rho}}\Phi_v\|\leqslant\int_{-\infty}^\infty\|V(x)e^{-i\tau'H_{0,\rho}}\Phi_v\|d\tau'=O(|v|^{1-2\rho}).
\end{equation}
\end{proof}
We are ready to prove the reconstruction theorem.

\begin{proof}[Proof of Theorem \ref{the3}]
As in the proof of Corollary \ref{cor}, we represent the difference between $W^+$ and $W^-$ by the integral
\begin{equation}
W_\rho^+-W_\rho^-=\int_{-\infty}^\infty\partial_te^{itH_\rho}e^{-itH_{0,\rho}}dt=i\int_{-\infty}^\infty e^{itH_\rho}V(x)e^{-itH_{0,\rho}}dt.
\end{equation}
Recall the intertwining property $e^{-itH_\rho}W_\rho^\pm=W_\rho^\pm e^{-itH_{0,\rho}}$. We can then compute
\begin{eqnarray}
\lefteqn{i(S_\rho-1)\Phi_v=i(W_\rho^+-W_\rho^-)^*W_\rho^-\Phi_v}\nonumber\\
\Eqn{=}\int_{-\infty}^\infty e^{itH_{0,\rho}}V(x)e^{-itH_\rho}W_\rho^-\Phi_vdt=\int_{-\infty}^\infty e^{itH_{0,\rho}}V(x)W_\rho^-e^{-itH_{0,\rho}}\Phi_vdt
\end{eqnarray}
and
\begin{eqnarray}
|v|^{2\rho-1}(i(S_\rho-1)\Phi_v,\Psi_v)\Eqn{=}|v|^{2\rho-1}\int_{-\infty}^\infty(V(x)W_\rho^-e^{-itH_{0,\rho}}\Phi_v,e^{-itH_{0,\rho}}\Psi_v)dt\nonumber\\
\Eqn{=}|v|^{2\rho-1}\int_{-\infty}^\infty I_v(t)dt+R_v,\label{eq3.3.1}
\end{eqnarray}
where we defined $I_v(t)$ and $R_v$ in \eqref{eq3.3.1} by
\begin{gather}
I_v(t)=(V(x)e^{-itH_{0,\rho}}\Phi_v,e^{-itH_{0,\rho}}\Psi_v),\\
R_v=|v|^{2\rho-1}\int_{-\infty}^\infty((W_\rho^--1)e^{-itH_{0,\rho}}\Phi_v,V(x)e^{-itH_{0,\rho}}\Psi_v)dt.
\end{gather}
Proposition \ref{prop} and Corollary \ref{cor} immediately give
\begin{equation}
R_v=O(|v|^{1-2\rho}).\label{eq3.3.2}
\end{equation}
Thus far, the proof has been roughly parallel to that in Enss--Weder \cite{EnWe}. However, the principal part of \eqref{eq3.3.1} demands further rigorous scrutiny. We first divide the integral as follows
\begin{equation}
|v|^{2\rho-1}\int_{-\infty}^\infty I_v(t)dt=|v|^{2\rho-1}\left(\int_{|t|<|v|^{-\sigma}}+\int_{|t|\geqslant|v|^{-\sigma}}\right)I_v(t)dt,
\end{equation}
where $\sigma>2\rho-1$ is independent of $t$ and $v$. We will later determine an upper bound on $\sigma$. Because $I_v(t)$ is uniformly bounded in $t$ and $v$, the integral on $|t|<|v|^{-\sigma}$ is
\begin{equation}
|v|^{2\rho-1}\int_{|t|<|v|^{-\sigma}}|I_v(t)|dt\leqslant C|v|^{2\rho-1-\sigma}.\label{eq3.3.3}
\end{equation}
We next consider the integral on $|t|\geqslant|v|^{-\sigma}$, which is represented by
\begin{eqnarray}
\lefteqn{|v|^{2\rho-1}\int_{|t|\geqslant|v|^{-\sigma}}I_v(t)dt=|v|^{2\rho-1}\int_{|t|\geqslant|v|^{-\sigma}}(V(x+(\nabla_\xi\omega_\rho)(v)t)\Phi_0,\Psi_0)dt}\qquad\nonumber\\
\Eqn{}+|v|^{2\rho-1}\int_{|t|\geqslant|v|^{-\sigma}}\left\{I_v(t)-(V(x+(\nabla_\xi\omega_\rho)(v)t)\Phi_0,\Psi_0)\right\}dt.\label{eq3.3.4}
\end{eqnarray}
We note that $(\nabla_\xi\omega_\rho)(v)=|v|^{2\rho-2}v$. After the change of the integral variable $\tau=|v|^{2\rho-1}t$, the first term of the right-hand side of \eqref{eq3.3.4} converges
\begin{eqnarray}
\lefteqn{|v|^{2\rho-1}\int_{|t|\geqslant|v|^{-\sigma}}(V(x+(\nabla_\xi\omega_\rho)(v)t)\Phi_0,\Psi_0)dt}\nonumber\\
\Eqn{=}\int_{|\tau|\geqslant|v|^{2\rho-1-\sigma}}(V(x+\hat{v}\tau)\Phi_0,\Psi_0)d\tau\longrightarrow\int_{-\infty}^\infty(V(x+\hat{v}\tau)\Phi_0,\Psi_0)d\tau
\end{eqnarray}
as $|v|\rightarrow\infty$ because we assumed that $2\rho-1-\sigma<0$. This also indicates that
\begin{eqnarray}
\lefteqn{|v|^{2\rho-1}\int_{|t|\geqslant|v|^{-\sigma}}(V(x+(\nabla_\xi\omega_\rho)(v)t)\Phi_0,\Psi_0)dt}\nonumber\\
\Eqn{=}\int_{-\infty}^\infty(V(x+\hat{v}t)\Phi_0,\Psi_0)dt+O(|v|^{2\rho-1-\sigma})\label{eq3.3.5}
\end{eqnarray}
by the uniformly boundedness of $(V(x+\hat{v}t)\Phi_0,\Psi_0)$. Recall the relations \eqref{eq1.1.1} and \eqref{eq1.1.2}. We then have
\begin{equation}
I_v(t)=(V(x+(\nabla_\xi\omega_\rho)(D_x+v)t)\Phi_0,\Psi_0).
\end{equation}
Therefore, as in the proof of Theorem \ref{the1}, we try to derive the order of decay in the second term on the right-hand side of \eqref{eq3.3.4} from the nearly cancellation of $(\nabla_\xi\omega_\rho)(\xi+v)$ and $(\nabla_\xi\omega_\rho)(v)$ on the support of $\mathscr{F}\Phi_0$. In our assumptions, $V$ belongs to $C^1(\mathbb{R}^n)$, however we can compute
\begin{eqnarray}
\lefteqn{V\Bigl(x+(\nabla_\xi\omega_\rho)(\xi+v)t\Bigr)-V\Bigl(x+(\nabla_\xi\omega_\rho)(v)t\Bigr)}\nonumber\\
\Eqn{=}\int_0^1(\nabla_xV)\Bigl(x+(\nabla_\xi\omega_\rho)(v)t+\theta\{(\nabla_\xi\omega_\rho)(\xi+v)-(\nabla_\xi\omega_\rho)(v)\}t\Bigr)\nonumber\\
\Eqn{}\hspace{50mm}\cdot\{(\nabla_\xi\omega_\rho)(\xi+v)-(\nabla_\xi\omega_\rho)(v)\}td\theta\label{eq3.3.6}
\end{eqnarray}
as the pseudo-differential symbolic calculus in the Fourier integral. We particularly note that the second- and higher-order derivatives of $V$ do not appear on the right-hand side of \eqref{eq3.3.6} because $(\nabla_\xi\omega_\rho)(\xi+v)-(\nabla_\xi\omega_\rho)(v)$ does not include $x$. Let $f_1,f_2\in C_0^\infty(\mathbb{R}^n)$ satisfy $\mathscr{F}\Phi_0=f_1\mathscr{F}\Phi_0$ and $f_1=f_2f_1$. Then $\Phi_0=f_2(D_x)f_1(D_x)\Phi_0$ holds. We define $g_{j,v}$ by
\begin{equation}
g_{j,v}(\xi)=\{(\partial_{\xi_j}\omega_\rho)(\xi+v)-(\partial_{\xi_j}\omega_\rho)(v)\}f_1(\xi)
\end{equation}
for $1\leqslant j\leqslant n$ and, as in \eqref{eq1.1.a}, \eqref{eq1.1.b} and \eqref{eq1.1.c}
\begin{equation}
|\partial_\xi^\beta g_{j,v}(\xi)|\leqslant C_\beta|v|^{2\rho-2}\label{eq3.3.7}
\end{equation}
follows for any $\beta$. We also define the vector-valued function $\psi_v$ by
\begin{equation}
\psi_v(t,x,\xi)=x+(\nabla_\xi\omega_\rho)(v)t+\theta\{(\nabla_\xi\omega_\rho)(\xi+v)-(\nabla_\xi\omega_\rho)(v)\}t
\end{equation}
to avoid complicated notation. Now, to estimate the second term in \eqref{eq3.3.4}, we only have to consider the following norm, which includes the integrand of the $j$-th term on the right-hand side of \eqref{eq3.3.6},
\begin{equation}
|t|\|(\partial_{x_j}V)(\psi_v(t,x,D_x))f_2(D_x)g_{j,v}(D_x)\Phi_0\|\leqslant J_1+J_2,
\end{equation}
where $J_1$ and $J_2$ are given by
\begin{eqnarray}
J_1\Eqn{=}|t|\left\|(\partial_{x_j}V)(\psi_v(t,x,D_x))f_2(D_x)\chi\left(\frac{x}{|v|^{2\rho-1}|t|/4}\right)g_{j,v}(D_x)\Phi_0\right\|,\nonumber\\
J_2\Eqn{=}|t|\left\|(\partial_{x_j}V)(\psi_v(t,x,D_x))f_2(D_x)\left\{1-\chi\left(\frac{x}{|v|^{2\rho-1}|t|/4}\right)\right\}g_{j,v}(D_x)\Phi_0\right\|.\qquad
\end{eqnarray}
For $J_1$, we insert
\begin{equation}
F\left(|x|>\frac{|v|^{2\rho-1}|t|}{4}\right)+F\left(|x|\leqslant\frac{|v|^{2\rho-1}|t|}{4}\right)=1
\end{equation}
between $g_{j,v}(D_x)$ and $\Phi_0$. Then $J_1$ is estimated so that $J_1\leqslant J_{1,1}+J_{1,2}$, where $J_{1,1}$ and $J_{1,2}$ are given by
\begin{eqnarray}
J_{1,1}\Eqn{=}C|t|\left\|F\left(|x|>\frac{|v|^{2\rho-1}|t|}{4}\right)\Phi_0\right\|,\\
J_{1,2}\Eqn{=}C|t|\left\|\chi\left(\frac{x}{|v|^{2\rho-1}|t|/4}\right)g_{j,v}(D_x)F\left(|x|\leqslant\frac{|v|^{2\rho-1}|t|}{4}\right)\Phi_0\right\|.
\end{eqnarray}
The estimate of $J_{1,1}$ is almost the same as \eqref{eq3.2.5}. However, in this estimate, we choose $\nu\in\mathbb{R}$ such that
\begin{equation}
J_{1,1}\leqslant C|t|\left\|F\left(|x|>\frac{|v|^{2\rho-1}|t|}{4}\right)\langle x\rangle^{-\nu}\right\|\|\langle x\rangle^\nu\Phi_0\|\leqslant C|t|\langle|v|^{2\rho-1}|t|\rangle^{-\nu}.
\end{equation}
Therefore, for $\nu>2$, we obtain
\begin{eqnarray}
\lefteqn{|v|^{2\rho-1}\int_{|t|\geqslant|v|^{-\sigma}}J_{1,1}dt\leqslant C|v|^{2\rho-1}\int_{|t|\geqslant|v|^{-\sigma}}|t|\langle|v|^{2\rho-1}|t|\rangle^{-\nu}dt}\nonumber\\
\Eqn{}\leqslant C|v|^{-(2\rho-1)(\nu-1)}\int_{|v|^{-\sigma}}^\infty t^{-\nu+1}dt=O(|v|^{-(2\rho-1)(\nu-1)+\sigma(\nu-2)}).\label{eq3.3.8}
\end{eqnarray}
Although this estimate holds for any $\nu>2$, the exponent is better when $\nu$ is closer to $2$ because
\begin{equation}
-(2\rho-1)(\nu-1)+\sigma(\nu-2)=(\nu-2)\{\sigma-(2\rho-1)\}+1-2\rho\label{eq3.3.8sub}
\end{equation}
and $\sigma>2\rho-1$. In the estimate of $J_{1,2}$, we compute the following commutator by using the pseudo-differential product formula \eqref{product_formula}
\begin{eqnarray}
\lefteqn{\left[\chi\left(\frac{x}{|v|^{2\rho-1}|t|/4}\right),g_{j,v}(\xi)\right]}\nonumber\\
\Eqn{=}-\sum_{1\leqslant|\beta|\leqslant N-1}\frac{1}{\beta!}\partial_\xi^\beta g_{j,v}(\xi)\times(-i\partial_x)^\beta\chi\left(\frac{x}{|v|^{2\rho-1}|t|/4}\right)+R_N(t,x,\xi)
\end{eqnarray}
for any $N\in\mathbb{N}$. As in the proof of Theorem \ref{the1}, the disjointness of two characteristic functions means that, for $0\leqslant|\beta|\leqslant N-1$,
\begin{equation}
\left\{\partial_x^\beta\chi\left(\frac{x}{|v|^{2\rho-1}|t|/4}\right)\right\}F\left(|x|\leqslant\frac{|v|^{2\rho-1}|t|}{4}\right)=0.
\end{equation}
Therefore, $J_{1,2}$ only has the remainder term $R_N$. To estimate $R_N$, we divide the integral again
\begin{equation}
|v|^{2\rho-1}\int_{|t|\geqslant|v|^{-\sigma}}J_{1,2}dt=|v|^{2\rho-1}\left(\int_{|v|^{-\sigma}\leqslant|t|<|v|^{1-2\rho}}+\int_{|t|\geqslant|v|^{1-2\rho}}\right)J_{1,2}dt.
\end{equation}
By using the $L^2$-boundedness \eqref{bounds}, when $|t|\geqslant|v|^{1-2\rho}$, $R_N(t,x,D_x)$ is estimated as
\begin{equation}
\|R_N(t,x,D_x)\|\leqslant C_N|v|^{2\rho-2}(|v|^{2\rho-1}|t|)^{-N}
\end{equation}
because $|v|^{2\rho-1}|t|\geqslant1$ holds in this case, where $|v|^{2\rho-2}$ comes from \eqref{eq3.3.7}. We therefore compute, for $N\geqslant3$
\begin{eqnarray}
\lefteqn{|v|^{2\rho-1}\int_{|t|\geqslant|v|^{1-2\rho}}J_{1,2}dt=C|v|^{2\rho-1}\int_{|t|\geqslant|v|^{1-2\rho}}|t|\|R_N(t,x,D_x)\|dt}\nonumber\\
\Eqn{}\qquad\leqslant C_N|v|^{2(2\rho-1)-1-(2\rho-1)N}\int_{|v|^{1-2\rho}}^\infty t^{-N+1}dt=O(|v|^{-1}).\qquad\label{eq3.3.9}
\end{eqnarray}
In contrast, when $|v|^{-\sigma}\leqslant|t|<|v|^{1-2\rho}$, there exists $\tilde{N}\geqslant N$ such that
\begin{equation}
\|R_N(t,x,D_x)\|\leqslant C_N|v|^{2\rho-2}(|v|^{2\rho-1}|t|)^{-\tilde{N}}
\end{equation}
because $|v|^{2\rho-1}|t|<1$ holds, and
\begin{eqnarray}
\lefteqn{|v|^{2\rho-1}\int_{|v|^{-\sigma}\leqslant|t|<|v|^{1-2\rho}}J_{1,2}dt=C|v|^{2\rho-1}\int_{|v|^{-\sigma}\leqslant|t|<|v|^{1-2\rho}}|t|\|R_N(t,x,D_x)\|dt}\nonumber\\
\Eqn{}\hspace{20mm}\leqslant C_N|v|^{2(2\rho-1)-1-(2\rho-1)\tilde{N}}\int_{|v|^{-\sigma}}^{|v|^{1-2\rho}} t^{-\tilde{N}+1}dt\nonumber\\
\Eqn{}\hspace{20mm}=O(|v|^{-1})+O(|v|^{2(2\rho-1)-1-(2\rho-1)\tilde{N}+\sigma(\tilde{N}-2)})\hspace{25mm}\label{eq3.3.10}
\end{eqnarray}
is obtained. This estimate holds for any $\tilde{N}\geqslant N$ ($\geqslant3$). However, the best exponent is the smallest $\tilde{N}$ because
\begin{equation}
2(2\rho-1)-1-(2\rho-1)\tilde{N}+\sigma(\tilde{N}-2)=(\tilde{N}-2)\{\sigma-(2\rho-1)\}-1.\label{eq3.3.10sub}
\end{equation}
From \eqref{eq3.3.8}, \eqref{eq3.3.8sub}, \eqref{eq3.3.9}, \eqref{eq3.3.10}, and \eqref{eq3.3.10sub}, we have
\begin{eqnarray}
|v|^{2\rho-1}\int_{|t|\geqslant|v|^{\sigma}}J_1dt=O(|v|^{(\nu-2)\{\sigma-(2\rho-1)\}+1-2\rho})+O(|v|^{(\tilde{N}-2)\{\sigma-(2\rho-1)\}-1}).\label{J_1}
\end{eqnarray}
We next consider $J_2$. On the supports of $f_2$ and $1-\chi$,
\begin{eqnarray}
\lefteqn{|\psi_v(t,x,\xi)|\geqslant|v|^{2\rho-1}|t|-|x|-|(\nabla_\xi\omega_\rho)(\xi+v)-(\nabla_\xi\omega_\rho)(v)||t|}\nonumber\\
\Eqn{}\geqslant|v|^{2\rho-1}|t|/2-|v|^{2\rho-1}|t|/8=3|v|^{2\rho-1}|t|/8\geqslant|v|^{2\rho-1}|t|/4\label{eq3.3.11}
\end{eqnarray}
holds for large $|v|$, here we used \eqref{eq1.1.4}. This says that
\begin{eqnarray}
\lefteqn{f_2(\xi)\left\{1-\chi\left(\frac{x}{|v|^{2\rho-1}|t|/4}\right)\right\}}\nonumber\\
\Eqn{=}\chi\left(\frac{\psi_v(t,x,\xi)}{|v|^{2\rho-1}|t|/8}\right)f_2(\xi)\left\{1-\chi\left(\frac{x}{|v|^{2\rho-1}|t|/4}\right)\right\},\label{eq3.3.11a}
\end{eqnarray}
because $\chi(\psi_v(t,x,\xi)/(|v|^{2\rho-1}|t|/8))=1$ by \eqref{eq3.3.11}. However, symbolically, \eqref{eq3.3.11a} is
\begin{eqnarray}
\lefteqn{\sum_{|\beta|\leqslant M-1}\frac{1}{\beta!}\left\{\partial_\xi^\beta\chi\left(\frac{\psi_v(t,x,\xi)}{|v|^{2\rho-1}|t|/8}\right)\right\}}\nonumber\\
\Eqn{}\hspace{10mm}\times f_2(\xi)(-i\partial_x)^\beta\left\{1-\chi\left(\frac{x}{|v|^{2\rho-1}|t|/4}\right)\right\}+R_M(t,x,\xi)
\end{eqnarray}
for any $M\in\mathbb{N}$ by using the asymptotic product formula \eqref{product_formula} again. We note that
\begin{equation}
\left|\partial_\xi^\beta\chi\left(\frac{\psi_v(t,x,\xi)}{|v|^{2\rho-1}|t|/8}\right)\right||f_2(\xi)|\leqslant C_\beta|v|^{-(2\rho-1)}\leqslant C_\beta
\end{equation}
for any $\beta$ with $|\beta|\geqslant1$ and $C_\beta$ is independent of $t$. Therefore, for $0\leqslant|\beta|\leqslant M-1$, it is sufficient to consider
\begin{equation}
\chi\left(\frac{\psi_v(t,x,\xi)}{|v|^{2\rho-1}|t|/8}\right)f_2(\xi)\times\partial_x^\beta\left\{1-\chi\left(\frac{x}{|v|^{2\rho-1}|t|/4}\right)\right\}\label{eq3.3.12}
\end{equation}
only. The term which includes \eqref{eq3.3.12} is estimated to be
\begin{eqnarray}
\lefteqn{|t|\left\|(\partial_{x_j}V)(\psi_v(t,x,D_x))\chi\left(\frac{\psi_v(t,x,D_x)}{|v|^{2\rho-1}|t|/8}\right)\right\|}\nonumber\\
\Eqn{}\hspace{20mm}\times\left\|\partial_x^\beta\left\{1-\chi\left(\frac{x}{|v|^{2\rho-1}|t|/4}\right)\right\}\right\|\|g_{j,v}(D_x)\|\nonumber\\
\Eqn{}\leqslant C|t|\langle|v|^{2\rho-1}|t|\rangle^{-1-\gamma}(|v|^{2\rho-1}|t|)^{-|\beta|}|v|^{2\rho-2},
\end{eqnarray}
here we used the decay condition on $V$ in \eqref{potential} and the estimate of $g_{j,v}$ in \eqref{eq3.3.7}. We then compute the following integral
\begin{eqnarray}
\lefteqn{|v|^{2\rho-1}\int_{|t|\geqslant|v|^{-\sigma}}|t|\langle|v|^{2\rho-1}|t|\rangle^{-1-\gamma}(|v|^{2\rho-1}|t|)^{-|\beta|}|v|^{2\rho-2}dt}\nonumber\\
\Eqn{}\hspace{10mm}\leqslant C|v|^{2(2\rho-1)-1-(2\rho-1)(1+\gamma)-(2\rho-1)|\beta|}\int_{|v|^{-\sigma}}^\infty t^{-\gamma-|\beta|}dt\nonumber\\
\Eqn{}\hspace{10mm}=O(|v|^{2(2\rho-1)-1-(2\rho-1)(1+\gamma)-(2\rho-1)|\beta|+\sigma(\gamma+|\beta|-1)})\label{eq3.3.13}
\end{eqnarray}
because $\gamma>1$. The decay exponent in \eqref{eq3.3.13} is represented by
\begin{gather}
2(2\rho-1)-1-(2\rho-1)(1+\gamma)-(2\rho-1)|\beta|+\sigma(\gamma+|\beta|-1)\nonumber\\
=(\gamma+|\beta|-1)\{\sigma-(2\rho-1)\}-1.\label{eq3.3.13sub}
\end{gather}
Because $\sigma>2\rho-1$, the top term between $0\leqslant|\beta|\leqslant M-1$ is $|\beta|=M-1$. To estimate the term involving $R_M$, we have to divide the integral into $|v|^{-\sigma}\leqslant|t|<|v|^{1-2\rho}$ and $|t|\geqslant|v|^{1-2\rho}$ once more. By the same argument in \eqref{eq1.1.7}, there exists $M'\in\mathbb{N}$ such that
\begin{equation}
\|R_M(t,x,D_x)\|\leqslant C_M\sum_{0\leqslant j\leqslant M'}(|v|^{2\rho-1}|t|)^{-j}\times\sum_{M\leqslant j\leqslant M+M'}(|v|^{2\rho-1}|t|)^{-j},
\end{equation}
where, in the summation of $0\leqslant j\leqslant M'$, we used the following boundedness again
\begin{equation}
\left|\partial_\xi^\beta\chi\left(\frac{\psi_v(t,x,\xi)}{|v|^{2\rho-1}|t|/8}\right)\right||f_2(\xi)|\leqslant C_\beta
\end{equation}
for any $\beta$. Therefore, when $|t|\geqslant|v|^{1-2\rho}$, we have
\begin{equation}
\|R_M(t,x,D_x)\|\leqslant C_M(|v|^{2\rho-1}|t|)^{-M}\label{eq3.3.14}
\end{equation}
because $|v|^{2\rho-1}|t|\geqslant1$. On the other hand, in the case where $|v|^{-\sigma}\leqslant|t|<|v|^{1-2\rho}$,
\begin{equation}
\|R_M(t,x,D_x)\|\leqslant C_M(|v|^{2\rho-1}|t|)^{-\tilde{M}}\label{eq3.3.15}
\end{equation}
is obtained because $|v|^{2\rho-1}|t|<1$, here we put $\tilde{M}=M+2M'$. From \eqref{eq3.3.14}, \eqref{eq3.3.15}, and \eqref{eq3.3.7}, it follows that
\begin{eqnarray}
\lefteqn{|v|^{2\rho-1}\left(\int_{|v|^{-\sigma}\leqslant|t|<|v|^{1-2\rho}}+\int_{|t|\geqslant|v|^{1-2\rho}}\right)|t|\|R_M(t,x,D_x)\|\|g_{j,v}(D_x)\|dt}\nonumber\\
\Eqn{}\hspace{70mm}=O(|v|^{(\tilde{M}-2)\{\sigma-(2\rho-1)\}-1})\label{eq3.3.16}
\end{eqnarray}
for $M\geqslant3$. The computation in \eqref{eq3.3.16} is quite similar to \eqref{eq3.3.9} and \eqref{eq3.3.10} (see also \eqref{eq3.3.10sub}). By \eqref{eq3.3.13}, \eqref{eq3.3.13sub}, and \eqref{eq3.3.16}, $J_2$ is estimated to be
\begin{equation}
|v|^{2\rho-1}\int_{|t|\geqslant|v|^{-\sigma}}J_2dt=O(|v|^{(\gamma+1)\{\sigma-(2\rho-1)\}-1})+O(|v|^{(\tilde{M}-2)\{\sigma-(2\rho-1)\}-1}),\label{J_2}
\end{equation}
here we fixed $|\beta|=2$ in \eqref{eq3.3.13} and \eqref{eq3.3.13sub}, because we can choose $M=3$. By combining \eqref{eq3.3.2}, \eqref{eq3.3.3}, \eqref{eq3.3.5}, \eqref{J_1}, and \eqref{J_2}, we obtain
\begin{eqnarray}
\lefteqn{|v|^{2\rho-1}(i(S_\rho-1)\Phi_v,\Psi_v)=\int_{-\infty}^\infty(V(x+\hat{v}t)\Phi_0,\Psi_0)dt}\nonumber\\
\Eqn{}+\ O(|v|^{1-2\rho})+O(|v|^{2\rho-1-\sigma})\nonumber\\
\Eqn{}+\ O(|v|^{(\nu-2)\{\sigma-(2\rho-1)\}+1-2\rho})+O(|v|^{(\tilde{N}-2)\{\sigma-(2\rho-1)\}-1})\nonumber\\
\Eqn{}+\ O(|v|^{(\gamma+1)\{\sigma-(2\rho-1)\}-1})+O(|v|^{(\tilde{M}-2)\{\sigma-(2\rho-1)\}-1})
\end{eqnarray}
as $|v|\rightarrow\infty$. We evaluate these error exponents. It is clear that $2\rho-1-\sigma<0$ and, that $1-2\rho<(\nu-2)\{\sigma-(2\rho-1)\}+1-2\rho<0$ because we can choose $\nu-2>0$ to be sufficiently small, independent of the size of $\sigma$. Therefore, to complete this proof, we need to ensure $\sigma$ satisfies $(\tilde{N}-2)\{\sigma-(2\rho-1)\}-1<0$, $(\gamma+1)\{\sigma-(2\rho-1)\}-1<0$ and $(\tilde{M}-2)\{\sigma-(2\rho-1)\}-1<0$ on condition that $\sigma>2\rho-1$. To do that, it suffices to choose $\sigma$ such that
\begin{equation}
2\rho-1<\sigma<2\rho-1+\min\{1/(1+\gamma),1/(\tilde{N}-2),1/(\tilde{M}-2)\}
\end{equation}
for $\gamma>1$, $\tilde{N}\geqslant3$ and $\tilde{M}\geqslant3$. This completes the proof.
\end{proof}
From the Plancherel formula associated with the Radon transformation (see Helgason \cite{He}), the proof of Theorem \ref{the2} can be performed in the same way as in Theorem 1.1 of Enss--Weder \cite{EnWe}. We thus omit the proof here.\\

\noindent\textbf{Acknowledgments.} This work was partially supported by the Grant-in-Aid for Young Scientists (B) \#16K17633 from JSPS. Moreover, the author would like to thank the late Professor Hitoshi Kitada for many valuable discussions and comments.



\begin{thebibliography}{99}
\bibitem{Ab}H.\ Abels, \textit{Pseudodifferential and Singular Integral Operators}, Walter de Gruyter GmbH, Berlin, 2012.
\bibitem{AdKaKaTo}T.\ Adachi, T.\ Kamada, M.\ Kazuno and K.\ Toratani, On multidimensional inverse scattering in an external electric field asymptotically zero in time, \textit{Inverse Problems} \textbf{27} (2011), no.\ 6, 065006. 17 pp.
\bibitem{AdFuIs}T.\ Adachi, Y.\ Fujiwara and A.\ Ishida, On multidimensional inverse scattering in time-dependent electric fields, \textit{Inverse Problems} \textbf{29} (2013), no.\ 8, 085012, 24 pp.
\bibitem{AdMa}T.\ Adachi and K.\ Maehara, On multidimensional inverse scattering for Stark Hamiltonians, \textit{J.\ Math.\ Phys.}\ \textbf{48} (2007), no.\ 4, 042101, 12 pp.
\bibitem{En}V.\ Enss, Propagation properties of quantum scattering states, \textit{J.\ Funct.\ Anal.}\ \textbf{52} (1983), no.\ 2, 219-251.
\bibitem{EnWe}V.\ Enss and R.\ Weder, The geometric approach to multidimensional inverse scattering, \textit{J.\ Math.\ Phys.}\ \textbf{36} (1995), no.\ 8, 3902-3921.
\bibitem{Gi}E.\ Giere, Asymptotic completeness for functions of the Laplacian perturbed by potentials and obstacles, \textit{Math.\ Nachr.}\ \textbf{263/264}  (2004), 133--153.
\bibitem{He}S.\ Helgason, \textit{Groups and Geometric Analysis}, Academic Press, 1984.
\bibitem{Is}A.\ Ishida, On inverse scattering problem for the Schr\"odinger equation with repulsive potentials, \textit{J.\ Math.\ Phys.}\ \textbf{55} (2014), no.8, 082101, 12 pp.
\bibitem{Ju}W.\ Jung, Geometrical approach to inverse scattering for the Dirac equation, \textit{J.\ Math.\ Phys.}\ \textbf{38} (1997), no.1, 39--48.
\bibitem{Ki1}H.\ Kitada, Scattering theory for the fractional power of negative Laplacian, \textit{Jour. Abstr. Differ. Equ. Appl.}\ \textbf{1}  (2010), no.1, 1--26.
\bibitem{Ki2}H.\ Kitada, A remark on simple scattering theory, \textit{Commun.\ Math.\ Anal.}\ \textbf{11} (2011), no.2, 123--138.
\bibitem{Ni1}F.\ Nicoleau, Inverse scattering for Stark Hamiltonians with short-range potentials, \textit{Asymptotic Anal.}\ \textbf{35} (2003), no.\ 3--4, 349--359.
\bibitem{Ni2}F.\ Nicoleau, An inverse scattering problem for short-range systems in a time-periodic electric field, \textit{Math.\ Res.\ Lett.}\ \textbf{12}  (2005), no.\ 5--6, 885--896.
\bibitem{Ni3}F.\ Nicoleau, Inverse scattering for a Schr\"{o}dinger operator with a repulsive potential, \textit{Acta Math.\ Sin.\ (Engl.\ Ser.)} \textbf{22} (2006), no.\ 5, 1485--1492.
\bibitem{Um1}T.\ Umeda, Radiation conditions and resolvent estimates for relativistic Schr\"odinger operators, \textit{Ann.\ Inst.\ H.\ Poincar\'e Phys.\ Th\'eor.}\ \textbf{63} (1995), no.\ 3, 277--296.
\bibitem{Um2}T.\ Umeda, T., The action of $\sqrt{-\Delta}$ on weighted Sobolev spaces, \textit{Lett.\ Math.\ Phys.}\ \textbf{54}  (2000), no.\ 4, 301--313.
\bibitem{VaWe}G.\ D.\ Valencia and R.\ Weder, High-velocity estimates and inverse scattering for quantum $N$-body systems with Stark effect, \textit{J.\ Math.\ Phys.}\ \textbf{53} (2012), no.\ 10, 102105, 30 pp.
\bibitem{Wa}K.\ Watanabe, Smooth perturbations of the self-adjoint operator $|\Delta|^{\alpha/2}$, \textit{Tokyo J.\ Math.}\ \textbf{14} (1991), no.1, 239--250.
\bibitem{Wo}M.\ W.\ Wong, \textit{An Introduction to Pseudo-Differential Operators}, World Scientific, 2014.
\bibitem{We1}R.\ Weder, Spectral properties of one-body relativistic Hamiltonians, \textit{Ann.\ Inst.\ H.\ Poincar\'e Sect.\ A (N.S.)} \textbf{20} (1974), 211--220.
\bibitem{We2}R.\ Weder, Spectral analysis of pseudodifferential operators, \textit{J.\ Funct.\ Anal.}\ \textbf{20}  (1975), no.\ 4, 319--337.
\bibitem{We3}R.\ Weder, Multidimensional inverse scattering in an electric field, \textit{J.\ Funct.\ Anal.}\ \textbf{139} (1996), no.\ 2, 441--465.
\bibitem{Wei}D.\ Wei, Completeness of eigenfunctions for relativistic Schr\"odinger operators I, \textit{Osaka J.\ Math.}\ \textbf{44}  (2007), no.4, 851--881.
\end{thebibliography}
\end{document}